\begin{document}
\title{Impact of Correlation between Nakagami-m Interferers on Coverage Probability and Rate in Cellular Systems}
\author{Suman Kumar \hspace*{1.0in} Sheetal Kalyani  \\
\hspace{0in} Dept. of Electrical Engineering \\
      \hspace{0in}IIT Madras \\
  \hspace{-0.7in}     Chennai 600036, India   \\ 
{\tt \{ee10d040,skalyani\}@ee.iitm.ac.in}\\
}

\maketitle
\begin{abstract} 
Coverage probability and rate expressions are theoretically compared for the following cases: $(i).$ Both the user channel and the $N$ interferers are independent and non identical Nakagami-m distributed random variables (RVs). $(ii).$ The  $N$ interferers are correlated Nakagami-m RVs. It is  analytically shown that  the coverage probability in the presence of correlated interferers is greater than or equal to the coverage probability in the presence of non-identical  independent interferers  when the  shape parameter of the channel between the user and its base station  is not greater than one. It is  further analytically shown that the average rate in the presence of correlated interferers is greater than  or equal to the average rate in the presence of non-identical  independent interferers.  Simulation results are provided  and these match with the obtained theoretical results. The utility of our results are also discussed.
\end{abstract}
\begin{keywords}
Majorization theory, Stochastic ordering, Nakagami-m fading, Correlation, Coverage probability, Average rate.
\end{keywords}
\section{Introduction}
Performance degradation of wireless communication is typically caused by multipath fading and co-channel interference. Various fading models have been studied in literature for modeling the interferers and user channels. Among them, the Nakagami-m distribution is a very popular fading model, and  Rayleigh fading can be treated as a special case of Nakagami-m fading \cite{goldsmith2005wireless}.

Coverage probability\footnote{It is a probability that a user can achieve a target Signal-to-Interference-plus-noise-Ratio (SINR) $T$, and outage probability is the complement of  coverage probability.} is an important metric for performance evaluation of cellular systems and for Nakagami-m fading environment it has been studied extensively  for both independent and correlated interferers case \cite{AbuDayya, zhang, 406635, Tellambura, alouini, Trigui, Hadzialic, QiuyanLiu, 905889}. In the case where the fading parameters are arbitrary and possibly non-identical for the $N$ Nakagami-m interferers and the fading parameter for the user channel is also arbitrary, coverage probability expression has been derived in terms of integral in \cite{QiuyanLiu}, infinite series in \cite{Tellambura, alouini, Hadzialic} and multiple series in \cite{905889}. 

Typically, in practical scenario correlation exists among the interferers \cite{correlation, 944855, 991146, 5590312}.   For example in cellular networks when two base stations (BSs) from adjacent sectors act as interferers, the interferers are correlated and it is mandated that while performing system level simulation, this correlation be explicitly introduced in the system \cite{3gpp}. Considering the impact of correlation in the large scale  shadowing component and the small scale  multipath component is also an essential step towards modeling the channel.  The decorrelation distance in multipath components is lower when compared to shadowing components since shadowing is related to terrain configuration and/or large obstacles between transmitter and receiver \cite{991146}.

For more general fading distributions, namely the $\eta-\mu$ fading \cite{4231253} which also includes Nakagami-m fading distribution as a special case, the coverage probability has been studied for both independent and correlated interferers case \cite{6171806, paris2013outage, 6661325}. In particular, Rayleigh fading for interferers channel, and $\eta-\mu$ fading for the user channel are assumed in \cite{6171806}. In \cite{paris2013outage},   $\eta-\mu$ fading has been considered assuming  integer values of fading parameter $\mu$ for interferers channel  and arbitrary fading parameter $\mu$  for the user channel.  Also, $\eta-\mu$ fading with integer values of fading parameter $\mu$ for either the user channel or the interferer channel but not for both  are assumed in \cite{6661325}. 

However, to the best of our knowledge, no prior work in open literature has  analytically compared the coverage probability and rate when interferers are independent with the coverage probability and rate when interferers are correlated. In this work, for Nakagami-m fading we compare the  coverage probability when the interferers are independent and non identically distributed  (i.n.i.d.)  with the coverage probability when the interferers are positively correlated\footnote{If $cov(X_i, X_j)\geq 0$ then $X_i$ and $X_j$  are positively correlated RVs, where $cov(X_i, X_j)$ denotes the covariance between $X_i$ and $X_j$ \cite{403769}.} using majorization theory. It is analytically shown that the coverage  probability in presence of correlated interferers is higher than the coverage probability when the interferers  are i.n.i.d., when the user channel's shape parameter is lesser than or equal to one, and the interferers have Nakagami-m fading with arbitrary parameters (i.e., shape parameter can be less than or greater than one).  We also show that when the user channel's shape parameter is greater than one, one cannot say whether coverage 
probability is higher or lower for the correlated case when compared to the independent case, and in some cases coverage probability is higher while in other cases it is lower\footnote{The expression for outage probability given in \cite{6171806, paris2013outage, 6661325} are in terms of multiple series. However, motivated by the bit error probability expression in the multiple antenna system in the presence of generalized fading model \cite {Aalo, 5073732, 5529757, 1315922, 1291791, 966575}, we have given equivalent expressions for coverage probability in terms of Lauricella hypergeometric function,  since that simplifies the analysis required for comparison between i.n.i.d. and correlated interferers case significantly.}.

We further analytically compare the average rate when the interferers are i.n.i.d.  with the average rate when the interferers are correlated using  stochastic ordering theory. It is shown that the average rate in the presence of positively correlated interferers is higher than  the average rate in the presence of i.n.i.d. interferers. Our results show that correlation among interferers is beneficial for the desired user. We briefly discuss how the desired user can exploit this correlation among the interferers to improve its rate in Section VI.  We have also carried out extensive simulations for both the i.n.i.d. interferers case and the correlated interferers case and some of these results are reported in Section VI.
In all the cases, the simulation results match with our theoretical results.  The work done here can be easily extended to the scenario where the user's  channel experience Nakagami-m fading and interfering channel experience $\eta-\mu$ fading with half integer or integer value of parameter $\mu$.
\section{System Model}
We consider a homogeneous macrocell network with hexagonal structure having inter cell site distance $2R$ as shown in Fig. \ref{fig:hexagonal}. The Signal-to-Interference-Ratio  (SIR) $\eta$ of a user located at $r$ meters from the BS is given by
\begin{equation}
\eta(r)=\frac{Pgr^{-\beta}}{\sum\limits_{i\in\phi}Ph_id_i^{-\beta}}, 
\end{equation}   
where $\phi$ denotes the set of interfering BSs and  $N=|\phi|$ denotes the cardinality of the set $\phi$. The transmit power of a BS is denoted by $P$.  A standard path loss model $r^{-\beta}$ is considered, where $\beta\geq 2$ is the path loss exponent. Note that for the path loss model $r^{-\beta}$ to be valid,  it is assumed that users are at least a minimum distance $d$ meter away from the BS. An interference limited network is assumed, and hence the noise power is neglected. The distance between user to tagged BS (own BS) and the $i$th interfering BS is denoted by $r$ and $d_i$, respectively. The user channel's power and the channel power between $i^{th}$ interfering BS and user are gamma distributed (corresponds to Nakagami-m fading) with $g\sim \mathcal{G}(\alpha_u, \lambda_u)$ and $h_i \sim \mathcal{G}(\alpha_i, \lambda_i)$, respectively.  The pdf  of the gamma RV $g$  is given by
\begin{equation}
f_Y(y)=\frac{y^{\alpha_u-1}e^{-\frac{y}{\lambda_u}}}{(\lambda_u)^{\alpha_u} \Gamma(\alpha_u)}, \text{ } y\geq 0
\end{equation}
where, $\alpha_u\geq 0.5$ is the shape parameter, $\lambda_u>0$ denotes the scale parameter, and $\Gamma(.)$ denotes the gamma function.
The coverage probability of a user located at distance $r$ meters from the BS is defined as
\begin{equation}
P_c(T,r)=P(\eta(r)>T)=P\left(\frac{gr^{-\beta}}{I}>T\right)= P\left(I<\frac{gr^{-\beta}}{T}\right)
\label{coverage}
\end{equation}
where $T$ denotes the target SIR, and $I=\sum\limits_{i\in\phi}h_id_i^{-\beta}$. Since $h_i \sim \mathcal{G}(\alpha_i, \lambda_i)$, hence $I$ is the sum of weighted i.n.i.d. gamma variates with weights $d_i^{-\beta}$. We will use the fact that weighted  gamma variates $h'_i=w_ih_i$  can be written as gamma variates with weighted scale parameter i.e., $h'_i\sim \mathcal{G}(\alpha_i, \lambda_i d_i^{-\beta} )$ \cite{DiSalvo}. Thus, $I=\sum\limits_{i\in\phi}h'_i$ is the sum of $N$ i.n.i.d gamma variates. The pdf of the sum of i.n.i.d. and correlated gamma RVs has been extensively studied in \cite{alouini, Moschopoulos, Aalo, Nadarajah_2008, Karagiannidis, Paris, Kalyani, reig2008performance, 4786505} and references therein. In the context of this paper, we use the confluent Lauricella function representation of the pdf of the sum of gamma variates.

 The sum, $X$ of $N$ i.n.i.d. gamma RVs, $h'_i\sim\mathcal{G}(\alpha_i,\lambda'_i)$ where $\lambda'_i=\lambda_i d_i^{-\beta}$ has a pdf given by \cite{exton1976multiple, Aalo, srivastava1985multiple},
\begin{equation}
f_X(x)=\frac{x^{\sum\limits_{i=1}^{N}\alpha_i-1}}{\prod\limits_{i=1}^{N} (\lambda'_i)^{\alpha_i}\Gamma\left(\sum\limits_{i=1}^{N}\alpha_i\right)}\phi_2^{(N)}\left(\alpha_1,\cdots,\alpha_N;\sum\limits_{i=1}^{N}\alpha_i;\frac{-x}{\lambda'_1 },\cdots,\frac{-x}{\lambda'_N }\right), \text{ } x\geq 0 
\end{equation}
where $\phi_2^{(N)}(.)$ is the confluent Lauricella function \cite{exton1976multiple, exton1978handbook, srivastava1985multiple}.
The cumulative distribution function (cdf)  of $X$ is given by
\begin{equation}
F_X(x)=\frac{x^{\sum\limits_{i=1}^{N}\alpha_i}}{\prod\limits_{i=1}^{N}(\lambda'_i )^{\alpha_i}	\Gamma\left(\sum\limits_{i=1}^{N}\alpha_i+1\right)}\phi_2^{(N)}\left(\alpha_1,  \cdots,\alpha_N;\sum\limits_{i=1}^{N}\alpha_i+1;\frac{-x}{\lambda'_1 },\cdots,\frac{-x}{\lambda'_N}\right).
\end{equation}
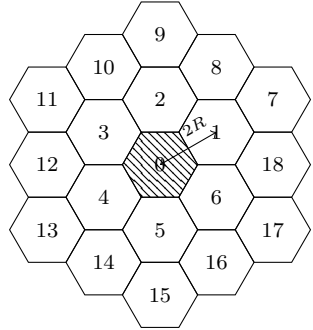
\begin{figure}[ht]
\centering
\begin{tikzpicture}
\node[pattern=north west lines, regular polygon, regular polygon sides=6,minimum width= 1  cm, draw] at (5,5) {};
\node at (5,5){\scriptsize $0$};
\node[regular polygon, regular polygon sides=6,minimum width=1 cm, draw] at (5+1.5*0.5,5-0.866*0.5) {};
\node at (5+1.5*0.5,5-0.866*0.5){\scriptsize $6$};
\node[ regular polygon, regular polygon sides=6,minimum width=1 cm, draw] at (5+1.5*0.5,5+0.866*0.5) {};
\node at (5+1.5*0.5,5+0.866*0.5){\scriptsize $1$};
\node[ regular polygon, regular polygon sides=6,minimum width=1 cm, draw] at (5-1.5*0.5,5+0.866*0.5) {};
\node at (5-1.5*0.5,5+0.866*0.5){\scriptsize $3$};
\node[ regular polygon, regular polygon sides=6,minimum width=1 cm, draw] at (5-1.5*0.5,5-0.866*0.5) {};
\node at (5-1.5*0.5,5-0.866*0.5){\scriptsize $4$};
\node[ regular polygon, regular polygon sides=6,minimum width=1 cm, draw] at (5,5-0.866) {};
\node at (5,5-0.866){\scriptsize $5$};
\node[ regular polygon, regular polygon sides=6,minimum width=1 cm, draw] at (5,5+0.866) {};
\node at (5,5+0.866){\scriptsize $2$};
\node[ regular polygon, regular polygon sides=6,minimum width=1 cm, draw] at (5,5+0.866*2) {};
\node at (5,5+0.866*2){\scriptsize $9$};
\node[ regular polygon, regular polygon sides=6,minimum width=1 cm, draw] at (5,5-0.866*2) {};
\node at (5,5-0.866*2){\scriptsize $15$};
\node[ regular polygon, regular polygon sides=6,minimum width=1 cm, draw] at (5+3*0.5,5+0.866*1) {};
\node at (5+3*0.5,5+0.866*1){\scriptsize $7$};
\node[ regular polygon, regular polygon sides=6,minimum width=1 cm, draw] at (5+3*0.5,5-0.866*1) {};
\node at (5+3*0.5,5-0.866*1){\scriptsize $17$};
\node[ regular polygon, regular polygon sides=6,minimum width=1 cm, draw] at (5-3*0.5,5-0.866*1) {};
\node at (5-3*0.5,5-0.866*1){\scriptsize $13$};
\node[ regular polygon, regular polygon sides=6,minimum width=1 cm, draw] at (5-3*0.5,5+0.866*1) {};
\node at (5-3*0.5,5+0.866*1){\scriptsize $11$};
\node[ regular polygon, regular polygon sides=6,minimum width=1 cm, draw] at (5+1.5,5) {};
\node at (5+1.5,5){\scriptsize $18$};
\node[ regular polygon, regular polygon sides=6,minimum width=1 cm, draw] at (5-1.5,5) {};
\node at (5-1.5,5){\scriptsize $12$};
\node[ regular polygon, regular polygon sides=6,minimum width=1 cm, draw] at (5+1.5*0.5,5-0.866*1.5) {};
\node at(5+1.5*0.5,5-0.866*1.5){\scriptsize $16$};
\node[ regular polygon, regular polygon sides=6,minimum width=1 cm, draw] at (5-1.5*0.5,5+0.866*1.5) {};
\node at(5-1.5*0.5,5+0.866*1.5){\scriptsize $10$};
\node[ regular polygon, regular polygon sides=6,minimum width=1 cm, draw] at (5-1.5*0.5,5-0.866*1.5) {};
\node at(5-1.5*0.5,5-0.866*1.5){\scriptsize $14$};
\node[ regular polygon, regular polygon sides=6,minimum width=1 cm, draw] at (5+1.5*0.5,5+0.866*1.5) {};
\node at(5+1.5*0.5,5+0.866*1.5){\scriptsize $8$};
\draw[<->] (5,5)--(5+1.5*0.5,5+0.866*0.5) node[pos=0.75,sloped,above] {\tiny$2R$};
\end{tikzpicture}
            \caption{ Macrocell network with hexagonal tessellation having inter cell site distance $2R$}
           
             \label{fig:hexagonal}
        \end{figure}
\section{Coverage Probability}
In this section, the coverage probability expression is given in terms of special functions for both the i.n.i.d. interferers case and correlated interferers case. 
\subsection{Coverage Probability in Presence of  i.n.i.d. Nakagami-m Fading}
The coverage probability expression can be written as $P\left(I<\frac{gr^{-\beta}}{T}\right)$. Using the fact that $I$ is the sum of $N$ i.n.i.d. gamma variates, one obtains,
\begin{equation}
P_c(T,r)=E_g\left[\frac{\left(\frac{gr^{-\beta}}{T}\right)^{\sum\limits_{i=1}^{N}\alpha_i}}{	\Gamma\left(\sum\limits_{i=1}^{N}\alpha_i+1\right)\prod\limits_{i=1}^{N}(\lambda'_i )^{\alpha_i}}\phi_2^{(N)}\left(\alpha_1,  \cdots,\alpha_N;\sum\limits_{i=1}^{N}\alpha_i+1;\frac{-gr^{-\beta}}{T\lambda'_1},\cdots,\frac{-gr^{-\beta}}{T\lambda'_N}\right)\right]\label{eq:cov1},
\end{equation}
where $E_g$ denotes expectation with respect to RV $g$ which is gamma distributed. Using transformation of variables with $\frac{g}{\lambda}=t$, and the fact that $g\sim \mathcal{G}(\alpha_u,\lambda_u)$, \eqref{eq:cov1} can be further simplified as 
\begin{equation}
P_c(T,r)=K'\lambda^{\sum\limits_{i=1}^{N}\alpha_i+\alpha_u}\int\limits_{0}^{\infty}\frac{t^{\sum\limits_{i=1}^{N}\alpha_i+\alpha_u-1}e^{-t}}{\Gamma\left(\sum\limits_{i=1}^{N}\alpha_i+\alpha_u\right)}\phi_2^{(N)}\left(\alpha_1, \cdots\alpha_N;\sum\limits_{i=1}^{N}\alpha_i+1;\frac{-t\lambda r^{-\beta}}{T\lambda'_1},\cdots\frac{-t\lambda r^{-\beta}}{T\lambda'_N}\right)\text{d}t\label{eq:cov2}.
\end{equation}
$\text{where, }K'=\frac{1}{\lambda^\alpha_u}\frac{\Gamma\left(\sum\limits_{i=1}^{N}\alpha_i+\alpha_u\right)}{\Gamma\left(\sum\limits_{i=1}^{N}\alpha_i+1\right)}\frac{1}{\Gamma{(\alpha_u)}}\prod\limits_{i=1}^{N}\left(\frac{1}{\lambda'_i r^{\beta}T}\right)^{\alpha_i}$. In order to  simplify \eqref{eq:cov2}, we use the following integral equation \cite[P. 286, Eq 43]{srivastava1985multiple} 
\begin{equation}
F_D^{(N)}[\alpha_u,\beta_1,\cdots,\beta_N;\gamma;x_1,\cdots, x_N]=\frac{1}{\Gamma(\alpha_u)}\int\limits_{0}^{\infty}e^{-t}t^{\alpha_u-1}\phi_2^{N}[\,\beta_1,\cdots,\beta_2,\gamma;x_1t,\cdots,x_Nt]\text{d}t,\label{eq:relationship}
\end{equation}
\begin{equation*}
 \max\{Re(x_1),\cdots,Re(x_N)\}<1, Re(\alpha_u)>0;
\end{equation*}
Here $F_D^{(N)}\left[a,b_1, \cdots, b_N;c;x_1,\cdots, x_N\right]$ is the  Lauricella's function of the fourth kind \cite{mathai1978h}. Using \eqref{eq:relationship}  to  evaluate \eqref{eq:cov2}, one obtains
\begin{equation}
\textstyle
P_c(T,r)=\frac{\Gamma\left(\sum\limits_{i=1}^{N}\alpha_i+\alpha_u\right)}{\Gamma\left(\sum\limits_{i=1}^{N}\alpha_i+1\right)}\frac{1}{\Gamma{(\alpha_u)}}\prod\limits_{i=1}^{N}\left(\frac{\lambda}{\lambda'_i r^{\beta}T}\right)^{\alpha_i} F_D^{(N)}\left[\sum\limits_{i=1}^{N}\alpha_i+\alpha_u,\alpha_1,  \cdots,\alpha_N;\sum\limits_{i=1}^{N}\alpha_i+1;\frac{-\lambda r^{-\beta}}{T\lambda'_1},\cdots,\frac{-\lambda r^{-\beta}}{T\lambda'_N}\right]\label{eq:cov3},
\end{equation}
$F_D^{(N)}( .)$ can be evaluated by using single integral expression \cite{mathai1978h, Aalo} or  multiple integral expression  \cite{exton1976multiple}.
A series expression for $F_D^{(N)}( .)$ involving N-fold infinite sums is given by
\begin{equation} 
F_D^{(N)}[a,b_1,\cdots, b_N;c;x_1,\cdots, x_N]=\sum\limits_{i_1\cdots i_N=0}^{\infty}\frac{(a)_{i_1+\cdots+i_N}(b_1)_{i_1}\cdots(b_N)_{i_N}}{(c)_{i_1+\cdots+i_N}}\frac{x_1^{i_1}}{i_1!}\cdots\frac{x_N^{i_N}}{i_N!},\label{eq:lauricella1}
\end{equation}
\begin{equation*}
\max\{|x_1|,\cdots|x_N|\}<1,
\end{equation*}
where, $(a)_n$ denotes the Pochhammer symbol which is defined as $(a)_n=\frac{\Gamma(a+n)}{\Gamma(a)}$.
 The series expression for Lauricella's function of the fourth kind converges if $\max\{|x_1|,$ $\cdots|x_N|\}<1$. However from \eqref{eq:cov3} it is apparent that convergence condition, i.e., $\underset{i}{\max}{|\frac{-\lambda r^{-\beta}}{T\lambda'_i}|}<1$ is not always satisfied, since $r<d_i$. Hence in order to obtain a series expression for $F_D^{(N)} (.)$ which converges, we use the following property of the Lauricella's function of the fourth kind \cite[p.286]{exton1976multiple}.
\begin{equation}
\textstyle
F_D^{(N)}[a,b_1,\cdots, b_N; c ;x_1,\cdots, x_N]=\left[ \prod_{i=1}^{N}(1-x_i)^{-b_i} \right]F_D^{(N)}\left(c-a,b_1,\cdots, b_N; c ;\frac{x_1}{x_1-1},\cdots, \frac{x_N}{x_N-1}\right) \label{eq:cov4}
\end{equation}
and rewrite   \eqref{eq:cov3}  as
\begin{equation}
\textstyle
P_c(T,r)=\frac{\Gamma\left(\sum\limits_{i=1}^{N}\alpha_i+\alpha_u\right)}{\Gamma\left(\sum\limits_{i=1}^{N}\alpha_i+1\right)\Gamma{(\alpha_u)}} \prod\limits_{i=1}^{N}\left(\frac{\lambda}{\lambda+\lambda'_i r^{\beta}T}\right)^{\alpha_i}F_D^{(N)}\left[1-\alpha_u,\alpha_1,  \cdots, \alpha_N;\sum\limits_{i=1}^{N}\alpha_i+1;\frac{\lambda}{\lambda+ r^{\beta}\lambda'_1T},\cdots, \frac{\lambda}{\lambda+ r^{\beta}\lambda'_NT}\right]\label{eq:cov5}
\end{equation}
\subsection{Coverage Probability in Presence of Correlated Interferers}
In this subsection, we obtain the  coverage probability expression in presence of correlated interferers, when the shape parameter of all the interferers are identical.  The sum, $Z$ of $N$ correlated not necessarily identically distributed gamma RVs $Y_i \sim \mathcal{G}(\alpha_c, \lambda'_i)$ has a cumulative distribution function given by \cite{Paris},\cite{Kalyani},
\begin{equation*}
F_Z(z)=\frac{z^{N\alpha_c}}{	\det(\mathbf{A})^{\alpha_c}\Gamma\left(N\alpha_c+1\right)}\phi_2^{(N)}\left(\alpha_c,\cdots, \alpha_c;N\alpha_c+1;\frac{-z}{\hat{\lambda}_1 },\cdots,\frac{-z}{\hat{\lambda}_N}\right),
\end{equation*}
here, $\mathbf{A=DC}$, where $\mathbf{D}$ is the diagonal matrix with entries $\lambda'_i$ and $\mathbf{C}$ is the symmetric positive definite (s.p.d.) $N\times N$ matrix defined by\\
\begin{equation}
\mathbf{C}=\left[ \begin{array}{cccc}
 1 & \sqrt{\rho_{12}} &...&\sqrt{\rho_{1N}}   \\ \sqrt{\rho_{21}} &1&...&\sqrt{\rho_{2N}} \\  \cdots &\cdots &\ddots & \cdots \\ \sqrt{\rho_{N1}} &\cdots &\cdots &1  \end{array} \right],\label{correlation}
\end{equation}
where $\rho_{ij}$ denotes the correlation coefficient between $Y_i$ and $Y_j$, and is given by,
\begin{equation}
\rho_{ij}=\rho_{ji}=\frac{cov(Y_i,Y_j)}{\sqrt{var(Y_i)var(Y_j)}}, 0\leq\rho_{ij}\leq 1, i,j=1,2, \cdots, N.
\end{equation}
${cov(Y_i,Y_j)}$ and $var (Y_i)$  denote the covariance between  $Y_i$ and $Y_j$ and variance of $Y_i$, respectively. $\det(\mathbf{A})=\prod\limits_{i=1}^{N}\hat{\lambda}_i$  is the determinant of the matrix $\mathbf{A}$, and $\hat{\lambda}_i$s are the eigenvalues of  $\mathbf{A}$. Note that $\hat{\lambda}_i>0 \text{ }\forall i$, since $\mathbf{C}$ is s.p.d. and the diagonal elements of $\mathbf{A}$ are equal to $\lambda'_i$. The functional form of cdf of sum of correlated gamma RVs is similar to the cdf of sum of i.n.i.d. gamma RVs. Hence the coverage probability in the presence of correlated interferers $P_c^c(T,r)$ can be similarly derived  and one obtains
\begin{equation}
\textstyle
P_c^c(T,r)= \frac{\Gamma\left(N\alpha_c+\alpha_u\right)}{\Gamma\left(N\alpha_c+1\right)}\frac{1}{\Gamma{(\alpha_u)}}\prod\limits_{i=1}^{N}\left(\frac{\lambda}{\lambda+\hat{\lambda}_i r^{\beta}T}\right)^{\alpha_c}F_D^{(N)}\left[1-\alpha_u,\alpha_c,  \cdots\alpha_c;N\alpha_c+1;\frac{\lambda}{\lambda+ r^{\beta}\hat{\lambda}_1T},\cdots\frac{\lambda}{\lambda+ r^{\beta}\hat{\lambda_N}T}\right], \label{corr}
\end{equation}
However, note that here the coverage probability is a function of the eigenvalues of $\mathbf{A}$ and the shape parameter of the user and interferer channels while in the i.n.i.d. case it was only a function of the shape parameters and scale parameters.
\section{Comparison of Coverage Probability} 
In this section, we  compare the coverage probability in the i.n.i.d. case and correlated case, and analytically quantify the impact of correlation. Note that the coverage probability expression for the correlated case is derived when the interferers shape parameter are all equal and hence for a fair comparison we consider equal shape parameter for the i.n.i.d. case also, i.e., $\alpha_i=\alpha_c \text{ }\forall i $. 
We first start with the special case when user channel's fading is Rayleigh (i.e, $\alpha_u=1$) and interferers have Nakagami-m fading with arbitrary parameters. When $\alpha_u=1$ and $\alpha_i=\alpha_c \text{ }\forall i $, then the coverage probability in the i.n.i.d. case given in \eqref{eq:cov5}  reduces to
\begin{equation}
P_c(T,r)=\prod\limits_{i=1}^{N}\left(\frac{\lambda}{\lambda+\lambda'_i r^{\beta}T}\right)^{\alpha_c} F_D^{(N)}\left[0,\alpha_c,  \cdots,\alpha_c;\sum\limits_{i=1}^{N}\alpha_c+1;\frac{\lambda}{\lambda+ r^{\beta}\lambda'_1T},\cdots,\frac{\lambda}{\lambda+ r^{\beta}\lambda'_NT}\right].\label{eq:cov6}
\end{equation}
Using the fact   that $(0)_0=1$ and $(0)_k=0 \text{ }\forall \text{ } k\geq 1$,  the coverage probability  is now given by
\begin{equation}
P_c(T,r)=\prod\limits_{i=1}^{N}\left(\frac{1}{1+{\lambda'}_i\frac{r^{\beta} T}{\lambda}}\right)^{\alpha_c}
\end{equation} 
Similarly, the coverage probability in correlated case $P_c^c(T,r)$ is given by
\begin{equation}
P_c^c(T,r)=\prod\limits_{i=1}^{N}\left(\frac{1}{1+\hat{\lambda}_i\frac{r^{\beta} T}{\lambda}}\right)^{\alpha_c}.
\end{equation}
We now state and prove the following theorem for the case where the user channel undergoes Rayleigh fading  and interferers experience  Nakagami-m fading and then generalize it to the  case where user also experiences Nakagami-m fading. 
\newtheorem{lemma}{Lemma} 
\newtheorem{theorem}{Theorem} 
\begin{theorem}
The coverage probability in correlated case is higher than that of the i.n.i.d. case, when user's  channel undergoes Rayleigh fading, i.e.,
\begin{equation}
\prod\limits_{i=1}^{N}\left(\frac{1}{{1+k\hat{\lambda}_i}}\right)^{\alpha_c}\geq \prod\limits_{i=1}^{N}\left(\frac{1}{{1+k\lambda'_i}}\right)^{\alpha_c}
\end{equation}
where $\hat{\lambda}_i$s  are the eigenvalues of  matrix $\mathbf{A}$ and $\lambda'_i$s are the scale parameter for the i.n.i.d. case and $k=\frac{r^{\beta} T}{\lambda}$ is a non negative constant.
\end{theorem}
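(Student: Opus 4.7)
The plan is to convert the stated product inequality into a majorization statement. After taking logarithms and dividing through by $\alpha_c>0$, the desired inequality becomes
\[
\sum_{i=1}^N \log(1+k\hat{\lambda}_i) \le \sum_{i=1}^N \log(1+k\lambda'_i).
\]
Since $x\mapsto \log(1+kx)$ is concave on $[0,\infty)$ for $k\ge 0$, the separable symmetric function $\phi(x_1,\ldots,x_N)=\sum_{i=1}^N\log(1+k x_i)$ is Schur-concave. It therefore suffices to establish that the eigenvalue vector $(\hat{\lambda}_1,\ldots,\hat{\lambda}_N)$ majorizes the scale-parameter vector $(\lambda'_1,\ldots,\lambda'_N)$.

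The core step is the majorization itself. Although $\mathbf{A}=\mathbf{D}\mathbf{C}$ is not symmetric, I would observe that it is similar to the real symmetric matrix $\mathbf{B}:=\mathbf{D}^{1/2}\mathbf{C}\mathbf{D}^{1/2}$ through the conjugation $\mathbf{A}=\mathbf{D}^{1/2}\mathbf{B}\mathbf{D}^{-1/2}$, so that $\mathbf{B}$ inherits the eigenvalues $\hat{\lambda}_1,\ldots,\hat{\lambda}_N$ of $\mathbf{A}$. Because $\mathbf{C}$ has unit diagonal, a direct computation gives $B_{ii}=(\lambda'_i)^{1/2}\cdot 1\cdot(\lambda'_i)^{1/2}=\lambda'_i$. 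Since $\mathbf{C}$ is s.p.d.\ and $\mathbf{D}$ has positive diagonal entries, $\mathbf{B}$ is symmetric positive definite, and the Schur--Horn theorem then yields the required majorization $(\hat{\lambda}_1,\ldots,\hat{\lambda}_N)\succ(\lambda'_1,\ldots,\lambda'_N)$. As a quick sanity check the trace identity $\sum_i \hat{\lambda}_i=\mathrm{tr}(\mathbf{DC})=\sum_i \lambda'_i$ holds automatically, matching the equality of sums demanded by majorization.

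Assembling these two ingredients via the Schur-concavity of $\phi$ gives $\sum_i \log(1+k\hat{\lambda}_i)\le \sum_i \log(1+k\lambda'_i)$, and exponentiating, reciprocating, and raising to the positive power $\alpha_c$ completes the proof. The principal obstacle, in my view, is spotting the symmetrization $\mathbf{D}^{1/2}\mathbf{C}\mathbf{D}^{1/2}$ that simultaneously preserves the eigenvalues of $\mathbf{A}$ and retains the scale parameters $\lambda'_i$ on the diagonal; without this observation the Schur--Horn theorem does not directly apply to $\mathbf{A}=\mathbf{DC}$, which is itself asymmetric. Once this similarity is recognized, the remaining ingredients (Schur-concavity of a concave separable sum and the trace consistency) are standard.
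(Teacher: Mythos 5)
Your proof is correct, and it takes a genuinely different (and in one respect more careful) route than the paper. The paper keeps the product form $\prod_{i}\left(1+kx_i\right)^{-\alpha_c}$ intact, proves it is convex by explicitly computing the Hessian and decomposing it as a sum of a rank-one Gram matrix and a positive diagonal matrix, invokes symmetry to conclude Schur-convexity, and then applies the majorization $\boldsymbol{\hat{\lambda}}\succ\boldsymbol{\lambda'}$. You instead take logarithms and reduce everything to the Schur-concavity of the separable sum $\sum_i\log(1+kx_i)$, which follows immediately from concavity of $x\mapsto\log(1+kx)$; this eliminates the Hessian computation entirely and is the more economical argument. On the majorization step, the paper simply asserts that $\mathbf{A}=\mathbf{DC}$ is symmetric and applies Schur's theorem to it directly; since $\mathbf{DC}$ is in general \emph{not} symmetric when the $\lambda'_i$ are unequal, that step as written is imprecise. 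Your symmetrization $\mathbf{B}=\mathbf{D}^{1/2}\mathbf{C}\mathbf{D}^{1/2}$, which is similar to $\mathbf{A}$ (hence shares its eigenvalues) and retains the $\lambda'_i$ on its diagonal, is exactly the repair needed to make the appeal to Schur's eigenvalue--diagonal majorization rigorous, so your version actually closes a small gap in the paper. The only cosmetic point is attribution: the direction you use (eigenvalues of a Hermitian matrix majorize its diagonal) is Schur's theorem, the paper's Theorem 2; Horn's contribution is the converse.
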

\begin{proof}
Note that since $\mathbf{A}=\mathbf{D}\mathbf{C}$, the diagonal elements of $\mathbf{A}$ are $\lambda'_i$s. We will briefly state two well known results from   majorization theory\footnote{The notation $\boldsymbol{a}\succ \boldsymbol{b}$  indicate that vector $\boldsymbol{b}$ is majorized by vector $\boldsymbol{a}$. Let $\boldsymbol{a}=[a_1,\cdots a_n]$ and $\boldsymbol{b}=[b_1,\cdots b_n]$ with  $a_1\leq, \cdots,\leq  a_n$ and  $b_1\leq, \cdots,\leq  b_n$ then $\boldsymbol{a}\succ \boldsymbol{b}$ if and only if 
\begin{equation}
\sum\limits_{i=1}^{k}b_i\geq \sum\limits_{i=1}^{k}a_i, \text{ } k=1,\cdots, n-1, \text{ and } \sum\limits_{i=1}^{n}b_i= \sum\limits_{i=1}^{n}a_i. 
\end{equation}} which we will use to prove Theorem $1$.
\begin{theorem}
 If $\boldsymbol{H}$ is an $n \times n$  Hermitian matrix with diagonal elements $b_1,\cdots b_n$  and eigenvalues $a_1,\cdots a_n$ then 
\begin{equation}
\boldsymbol{a} \succ \boldsymbol{b}
\end{equation}
\end{theorem}
\begin{proof}
The details of the proof can be found in  \cite[P. 300, B.1.]{marshall2011inequalities}. 
\end{proof}
In our case, since $\hat{\lambda_i}$s are the eigenvalues and $\lambda'_i$s are the diagonal elements of a symmetric matrix $\mathbf{A}$ hence from Theorem $2$, $\boldsymbol{\hat{\lambda}} \succ \boldsymbol{\lambda'}$ where $\boldsymbol{\hat{\lambda}}=[\hat{\lambda_1}, \cdots, \hat{\lambda_n}]$ and $\boldsymbol{\lambda'}=[{\lambda'_1}, \cdots, {\lambda'_n}]$.

\newtheorem{proposition}{Proposition}
\begin{proposition}
If function $\phi $ is symmetric and convex, then $\phi$ is Schur-convex function. Consequently, $\boldsymbol{x} \succ \boldsymbol {y}$ implies $\phi(\boldsymbol{x})\geq \phi(\boldsymbol{y})$.
\end{proposition}
\begin{proof}
For the details of this proof please refer to \cite[P. 97, C.2.]{marshall2011inequalities}.
\end{proof}

Now if it can be shown that $\prod\limits_{i=1}^{N}\left(\frac{1}{{1+k\lambda'_i}}\right)^{\alpha_c}$ $\left(\text { and }\prod\limits_{i=1}^{N}\left(\frac{1}{{1+k\hat{\lambda}_i}}\right)^{\alpha_c}\right)$ is a Schur-convex function then by a simple application of Proposition $1$ it is evident that $\prod\limits_{i=1}^{N}\left(\frac{1}{{1+k\hat{\lambda}_i}}\right)^{\alpha_c}\geq \prod\limits_{i=1}^{N}\left(\frac{1}{{1+k\lambda'_i}}\right)^{\alpha_c}$. To prove that $\prod\limits_{i=1}^{N}\left(\frac{1}{{1+kx_i}}\right)^{\alpha_c}$ is a Schur convex function we need to show that it is a symmetric and convex function \cite{marshall2011inequalities}. 

It is apparent that the function $\prod\limits_{i=1}^{N}\left(\frac{1}{{1+kx_i}}\right)^{\alpha_c}$ is a symmetric function due to the fact that any two of its arguments can be interchanged without changing the value of the function.
So we now  need to show  that the function  $f(x_1,\cdots,x_n)=\prod\limits_{i=1}^{N}\left(\frac{1}{1+kx_i}\right)^{a_i} $ is a convex function where $x_i\geq 0$, $a_i > 0$.
The function $f(x_1,\cdots,x_n)$ is convex if and only if its Hessian 
$\nabla^2\ f $ is positive semi-definite  \cite{boyd2004convex}. Now, $\nabla^2\ f $ can be computed as
\begin{equation}  
\nabla^2\ f= k^2 \prod\limits_{i=1}^{N}\left(\frac{1}{1+kx_i}\right)^{a_i}\left[  \begin{array}{cccc}
\frac{a_1( a_1+1)}{(1+kx_1)^2} &\frac{a_1a_2}{(1+kx_1)(1+kx_2)}&\cdots  &\frac{a_1a_n}{(1+kx_1)(1+kx_n)}\\ \frac{a_1a_2}{(1+kx_1)(1+kx_2)} &\frac{a_2( a_2+1)}{(1+kx_2)^2}&\cdots  &\frac{a_2a_n}{(1+kx_2)(1+kx_n)}\\ \cdots &\cdots &\ddots  &\cdots\\ \frac{a_1a_n}{(1+kx_1)(1+kx_n)}&\frac{a_2a_n}{(1+kx_2)(1+kx_n)}&\cdots  &\frac{a_n( a_n+1)}{(1+kx_n)^2} \end{array} \right]\label{Hessian}
\end{equation}
We now need  to show that $\nabla^2\ f$ is a positive semi-definite  matrix.   For a real symmetric matrix $\boldsymbol{M}$, if $\boldsymbol{x}^T\boldsymbol{M}\boldsymbol{x}>0$ for every  $N\times 1$ nonzero real vector $\boldsymbol{x}$, then the matrix $\boldsymbol{M}$ is  positive definite (p.d.) matrix \cite[P. 566]{meyer2000matrix}. We now rewrite the Hessian matrix as sum of two matrices and it is then given by
\begin{equation}
\nabla^2\ f= k^2 \prod\limits_{i=1}^{N}\left(\frac{1}{1+kx_i}\right)^{a_i}[\mathbf{P}+\mathbf{Q}].
\end{equation}
Here,
\begin{equation}
\mathbf{P}=\left[  \begin{array}{cccc}
\frac{a^2_1}{(1+kx_1)^2} &\frac{a_1a_2}{(1+kx_1)(1+kx_2)}&\cdots  &\frac{a_1a_n}{(1+kx_1)(1+kx_n)}\\ \frac{a_1a_2}{(1+kx_1)(1+kx_2)} &\frac{a^2_2}{(1+kx_2)^2}&\cdots  &\frac{a_2a_n}{(1+kx_2)(1+kx_n)}\\ \cdots &\cdots &\ddots  &\cdots\\ \frac{a_1a_n}{(1+kx_1)(1+kx_n)}&\frac{a_2a_n}{(1+kx_2)(1+kx_n)}&\cdots  &\frac{a^2_n}{(1+kx_n)^2} \end{array} \right]
\end{equation} 
and 
\begin{equation}
\mathbf{Q}=\left[  \begin{array}{cccc}
\frac{a_1}{(1+kx_1)^2} &0&\cdots  &0\\ 0 &\frac{a_2}{(1+kx_2)^2}&\cdots  &0\\ \cdots &\cdots &\ddots  &\cdots\\ 0&0&\cdots  &\frac{a_n}{(1+kx_n)^2} \end{array} \right] .
\end{equation} 
Here  by definition $k^2 \prod\limits_{i=1}^{N}\left(\frac{1}{1+kx_i}\right)^{a_i}>0$. If now both $\boldsymbol{P}$ and $\boldsymbol{Q}$ are  p.d. then  $\nabla^2\ f$ is p.d. Note that $\boldsymbol{P}$ can be written as $\boldsymbol{U}^T\boldsymbol{U}$ where  $\boldsymbol{U}=\left[\frac{a_1}{(1+kx_1)},\frac{a_2}{(1+kx_2)},\cdots,\frac{a_n}{(1+kx_n)}\right]$ is a $N\times 1 $ vector. Hence, $\boldsymbol{x}^T\boldsymbol{P}\boldsymbol{x}= \boldsymbol{x}^T(\boldsymbol{U}^T\boldsymbol{U})\boldsymbol{x}=||\boldsymbol{U}\boldsymbol{x}||^2 >0$ for every  $N\times 1$ nonzero real vector $\boldsymbol{x}$. Thus, $\boldsymbol{P}$ is a p.d matrix.  Since $\boldsymbol{Q}$ is a diagonal matrix with positive entries, $\boldsymbol{Q}$ is also a p.d matrix. Since sum of two p.d. matrix is p.d. matrix hence $\boldsymbol{P}+\boldsymbol{Q}$ is a p.d. matrix. Thus, $\nabla^2\ f$  is a p.d. matrix and $f(x_1,\cdots,x_n)=\prod\limits_{i=1}^{N}\left(\frac{1}{1+kx_i}\right)^{a_i} $ is a convex function.

Since $\prod\limits_{i=1}^{N}\left(\frac{1}{{1+kx_i}}\right)^{\alpha_c}$ is a convex function and  a symmetric function therefore, it is a Schur-convex function. We have shown that  $\boldsymbol{\hat{\lambda}}\succ \boldsymbol{\lambda'}$  and $\prod\limits_{i=1}^{N}\left(\frac{1}{1+kx_i}\right)^{\alpha_c}$ is a Schur-convex function. Therefore, from Proposition 1,
 $\prod\limits_{i=1}^{N}\left(\frac{1}{{1+k\hat{\lambda}_i}}\right)^{\alpha_c}\geq \prod\limits_{i=1}^{N}\left(\frac{1}{{1+k\lambda'_i}}\right)^{\alpha_c}$.
\end{proof}
Thus, the coverage probability in the presence of correlation among the interferers  is greater than or equal  to the coverage probability in  the i.n.i.d. case,  when user channel undergoes Rayleigh fading and the interferers shape parameter $\alpha_i=\alpha_c \text{ }\forall i$.
Now, we compare the coverage probability for general case, i.e., when   $\alpha_u$ is arbitrary.
\begin{theorem}
The coverage probability in the presence of the correlated interferers is greater than or equal to the coverage probability in presence of i.n.i.d. interferers, when user channel's shape parameter is less than or equal to $1$, i.e., $\alpha_u \leq 1$. When  $\alpha_u>1$, coverage probability in the presence of i.n.i.d. is not always lesser than the coverage probability in the presence of correlated interferers. \end{theorem}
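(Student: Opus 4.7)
The plan is to reduce the general-$\alpha_u$ case to the Rayleigh case already handled by Theorem 1 via a single-integral representation of the Lauricella function. The key structural observation is that the i.n.i.d.\ coverage expression in \eqref{eq:cov5} with $\alpha_i=\alpha_c$ and the correlated expression \eqref{corr} share one and the same functional form $\psi(x_1,\dots,x_N)$; the former is obtained by substituting the diagonal entries $\lambda'_i$ of $\mathbf{A}$ and the latter by substituting its eigenvalues $\hat{\lambda}_i$. Theorem 2 has already furnished the majorization $\boldsymbol{\hat{\lambda}}\succ\boldsymbol{\lambda'}$, so by Proposition 1 the first assertion reduces to proving that $\psi$ is Schur-convex on $(0,\infty)^N$ whenever $\alpha_u\leq 1$.

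For $\alpha_u=1$ this is exactly Theorem 1, because $(1-\alpha_u)_k=0$ for $k\geq 1$ and the Lauricella factor collapses to $1$. For $\alpha_u\in[0.5,1)$ I would dispose of the Lauricella function by invoking the Euler-type single-integral representation
\begin{equation*}
F_D^{(N)}[a,b_1,\dots,b_N;c;x_1,\dots,x_N]=\frac{\Gamma(c)}{\Gamma(a)\Gamma(c-a)}\int_0^1 u^{a-1}(1-u)^{c-a-1}\prod_{i=1}^{N}(1-x_iu)^{-b_i}\,du,
\end{equation*}
which is valid for $0<a<c$, applied with $a=1-\alpha_u$ and $c=N\alpha_c+1$. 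A short simplification using $z_i=\lambda/(\lambda+r^\beta\lambda'_iT)$ and the identity $z_i^{\alpha_c}(1-z_iu)^{-\alpha_c}=\lambda^{\alpha_c}((1-u)\lambda+r^\beta\lambda'_iT)^{-\alpha_c}$ would then collapse $P_c(T,r)$ into the form
\begin{equation*}
P_c(T,r)=\frac{1}{\Gamma(\alpha_u)\Gamma(1-\alpha_u)}\int_0^1 u^{-\alpha_u}(1-u)^{\alpha_u-1}\prod_{i=1}^{N}\left(\frac{1}{1+k_u\lambda'_i}\right)^{\alpha_c}du,
\end{equation*}
where $k_u=r^\beta T/((1-u)\lambda)>0$ and the weight $u^{-\alpha_u}(1-u)^{\alpha_u-1}$ is non-negative and integrable on $(0,1)$ precisely because $\alpha_u\in(0,1)$.

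For every fixed $u$, the inner product is literally the Theorem 1 function with $k$ replaced by $k_u$, so it is symmetric and convex and hence Schur-convex in $(\lambda'_1,\dots,\lambda'_N)$. Integration against a non-negative weight preserves both symmetry and Schur-convexity, so $\psi$ is Schur-convex in its scale-parameter arguments, and combining with $\boldsymbol{\hat{\lambda}}\succ\boldsymbol{\lambda'}$ and Proposition 1 yields $P_c^c(T,r)\geq P_c(T,r)$ and closes the first assertion. For the second assertion, when $\alpha_u>1$ the Euler representation fails (both $a>0$ and integrability at $u=0$ break down) and the Pochhammer $(1-\alpha_u)_k$ appearing in the series of $F_D^{(N)}$ alternates in sign with $k$, so neither Schur-convexity nor Schur-concavity is forced. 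I would substantiate the ``not always'' clause by exhibiting a numerical counterexample drawn from the simulation study in Section VI, i.e.\ displaying parameter configurations where the correlated coverage probability falls below the i.n.i.d.\ one.

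The main obstacle I expect is the algebraic bookkeeping: the prefactor $\prod_i z_i^{\alpha_c}$ of \eqref{eq:cov5} must combine with the Lauricella integrand in exactly the right way to produce the Theorem-1 product $\prod_i(1+k_u\lambda'_i)^{-\alpha_c}$, and one must carefully track the $(1-u)^{-N\alpha_c}$ factor that arises when pulling $((1-u)\lambda)$ out of each term. Once this simplification is verified, the endpoint integrability at $u=0$ (precisely where $\alpha_u<1$ is used) checked, and Theorem 1 applied at each fixed $u$, the result follows by integration and the counterexample for $\alpha_u>1$ is a routine numerical construction.
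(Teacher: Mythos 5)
Your proposal is correct, and it takes a genuinely different route from the paper. The paper works with the $N$-fold series expansion of $F_D^{(N)}$: it groups the series into symmetric sums of monomials in the $z_i=\lambda/(\lambda+r^{\beta}\lambda'_iT)$, shows each such symmetric sum is Schur-convex by identifying its Hessian blocks with principal submatrices of the Hessian from Theorem 1, and then observes that for $\alpha_u<1$ every coefficient $(1-\alpha_u)_k(\alpha_c)_{\cdots}/(N\alpha_c+1)_k$ is positive, so the whole (prefactor times series) is a nonnegative combination of Schur-convex terms; for $\alpha_u>1$ the Pochhammer $(1-\alpha_u)_k$ alternates in sign and the argument breaks, with Fig.~4 supplying the two-sided numerical evidence. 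Your Euler-integral route reaches the same conclusion more cleanly: the algebra you flag as the main obstacle does work out exactly as you state, since $z_i^{\alpha_c}(1-z_iu)^{-\alpha_c}=(1-u)^{-\alpha_c}(1+k_u\lambda'_i)^{-\alpha_c}$ absorbs the $(1-u)^{N\alpha_c}$ into the Beta weight, leaving $P_c(T,r)=\tfrac{1}{\Gamma(\alpha_u)\Gamma(1-\alpha_u)}\int_0^1u^{-\alpha_u}(1-u)^{\alpha_u-1}\prod_i(1+k_u\lambda'_i)^{-\alpha_c}\,du$, a mixture of Theorem-1 functions against a nonnegative integrable weight precisely when $0<\alpha_u<1$ (with $\alpha_u=1$ handled separately as you do). This buys a shorter argument, avoids the multi-index bookkeeping and the principal-submatrix step, and makes the role of the hypothesis $\alpha_u\le 1$ transparent (positivity and integrability of the Beta$(1-\alpha_u,\alpha_u)$ weight, equivalently $\Gamma(\alpha_u)\Gamma(1-\alpha_u)=\pi/\sin(\pi\alpha_u)>0$). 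For the $\alpha_u>1$ clause both you and the paper ultimately rest on a numerical counterexample; the paper's series argument only shows that its own method is inconclusive there, so your plan to exhibit explicit parameter configurations where the inequality reverses is at least as substantive as what the paper provides.
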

\begin{proof}
Please see Appendix.
\end{proof}
Summarizing, the coverage probability in the presence of correlated interferers is greater than or equal to the coverage probability in presence of independent interferers, when user channel's shape parameter is less than or equal to $1$, i.e., $\alpha_u\leq 1$. When $\alpha_u>1$, one can not say whether coverage probability is better in correlated interferer case or independent interferer case. Note that when $\alpha_u\leq 1$, usually the interferers $\alpha_i$ is also smaller than $1$. However, the proof we have  holds for both $\alpha_i>1$ and $\alpha_i<1$.
\section{Comparison of Rate}
In this section, we compare the average rate when interferers are i.n.i.d with the average rate when the interferers are correlated. We first start with the special case, where $\alpha_u \leq 1$, while the interferer's shape parameter is arbitrary. Then,  the general case is analysed, i.e.,  when user's shape parameter and interferers shape parameter both can be arbitrary, and the scale parameters are also arbitrary.  
\subsection{When user's shape parameter is less  than or equal to $1$.}
The average rate of a user at a distance $r$ is $R(r)=E[\ln(1+\eta(r))]$. Using the fact that for a positive RV $X$, $E[X]=\int\limits_{t>0}P(X>T)\text{d}t$, one obtains
\begin{equation}
R(r)=\int\limits_{t>0}P[\ln(1+\eta(r))>t]\text{d}t.
\end{equation}
\begin{equation}
R(r)\stackrel{(a)}{=}\int\limits_{t>0}P[\eta(r)>e^t-1]\text{d}t.\label{compare_rate1}
\end{equation}
Here $(a)$ follows from the fact that $\ln(1+\eta(r))$ is a monotonic increasing function for $\eta(r)$. Similarly, for correlated case, average rate at distance $r$, $\hat{R}(r)$ is given by
\begin{equation}
\hat{R}(r)=\int\limits_{t>0}P[\hat{\eta}(r)>e^t-1]\text{d}t.\label{compare_rate2}
\end{equation}
Here $\hat{\eta}(r)$ denotes the SIR experienced by the user when interferers are correlated. Now, we compare  $R(r)$ and $\hat{R}(r)$ when $\alpha_u\leq 1$ to see the impact of correlation on the average rate. The integrands of \eqref{compare_rate1} and \eqref{compare_rate2}, i.e.,  $P[\eta(r)>e^t-1]$ and $P[\hat{\eta}(r)>e^t-1]$ are equivalent to the coverage probability expressions for independent interferers case and for correlated interferers case evaluated at $T=e^t-1$, respectively. It has been shown in Theorem $3$ that the coverage probability in the presence of the correlated interferers is greater than or equal to the coverage probability in the presence of independent interferers, when $\alpha_u\leq 1$. In other words, $P[\hat{\eta}(r)>e^t-1]\geq P[\eta(r)>e^t-1]$, $\forall \text{ }t$ when $\alpha_u \leq 1$. Therefore, it is apparent from \eqref{compare_rate1} and \eqref{compare_rate2} that $\hat{R}(r)>R(r)$, when $\alpha_u\leq 1$ since for both integration is over the same interval.

Now, we will compare the average rate when both $\alpha_u$ and $\alpha_i=\alpha_c \text{ } \forall i$ are arbitrary. It is difficult to compare the rate  using the approach  given  above for $\alpha_u\leq 1$ since coverage probability in the presence of the correlated interferers can be greater or lower to the coverage probability in the presence of independent interferers, when $\alpha_u> 1$ (See Appendix for more detail). Hence we compare the average rate using  stochastic ordering theory.
\subsection{When both user's shape parameter and  interferer's shape parameter are arbitrary}
In this subsection, we compare $R=E[\ln(1+\frac{S}{I})]$, and $\hat{R}=E[\ln(1+\frac{S}{\hat{I}})]$ for arbitrary values of shape parameter. Here $\eta(r)=\frac{S}{I}$ and  $\hat{\eta}(r)=\frac{S}{\hat{I}}$, where $S$ is the desired user channel power. Using iterated expectation one can rewrite the rates as
\begin{equation}
R=E_S\left[E_I\left[\ln\left(1+\frac{S}{I}\right)\bigg|S=s\right]\right], \text{ and } \hat{R}=E_S\left[E_{\hat{I}}\left[\ln\left(1+\frac{S}{\hat{I}}\right)\bigg|S=s\right]\right].
\end{equation}
Since the  expectation operator preserves inequalities, therefore if we can show that   $E_{\hat{I}}\bigg[\ln\bigg(1+\frac{S}{\hat{I}}\bigg)$ $\bigg|S=s\bigg]\geq E_I\left[\ln\left(1+\frac{S}{I}\right)\bigg|S=s\right]$, then this implies $\hat{R}\geq R$.

 Here $I$ and $\hat{I}$ are the sum of independent and  correlated interferers, respectively. The sum of interference power in the i.n.i.d. case can be written as  
\begin{equation}
I=\sum\limits_{i=1}^{N}h'_i=\sum\limits_{i=1}^{N}\lambda'_iG_i \text{ with } h'_i\sim \mathcal{G}(\alpha_c,\lambda'_i) \text{ and } G_i\sim \mathcal{G}(\alpha_c,1)
\end{equation}
Similarly, for the correlated case,
\begin{equation}
\hat{I}=\sum\limits_{i=1}^{N}\hat{h}_i=\sum\limits_{i=1}^{N}\hat{\lambda}_iG_i, \label{corre}
\end{equation}
where $\hat{h}_i\sim \mathcal{G}(\alpha_c,\lambda'_i)$.  Recall that these $\hat{h}_i$  are correlated with the correlation structure defined by correlation matrix $\boldsymbol{C}$ given in \eqref{correlation}, and  $\hat{\lambda}_i$s are the eigenvalues of the matrix $\boldsymbol{A}=\boldsymbol{D}\boldsymbol{C}$. In other words one can obtain a correlated sum of gamma variates by multiplying independent and identical distributed (i.i.d.) gamma variates with weight $\hat{\lambda}_i$s.  We now briefly state the theorems in stochastic order theory  that we will use to show that $\hat{R}$ is always greater than equal to $R$.   
\begin{theorem}
Let $X_{1}, X_{2},\cdots, X_{N}$ be exchangeable RVs. Let $\boldsymbol{a}=(a_1,a_2,\cdots,a_N)$ and $\boldsymbol{b}=(b_1,b_2,\cdots,b_N)$ be two vectors of constants. If $\boldsymbol{a}\prec \boldsymbol{b}$, then
\begin{equation}
\sum\limits_{i=1}^{N}a_iX_i\leq_{cx}\sum\limits_{i=1}^{N}b_iX_i.
\end{equation}
\end{theorem}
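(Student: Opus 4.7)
The plan is to establish the convex order $\sum_i a_i X_i \leq_{cx} \sum_i b_i X_i$ directly from its definition, namely that $E[\phi(\sum_i a_i X_i)] \leq E[\phi(\sum_i b_i X_i)]$ for every convex function $\phi$ for which the expectations exist. The proof combines two classical structural results about majorization with the exchangeability hypothesis on the $X_i$.

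First, I would invoke the Hardy--Littlewood--P\'olya theorem: $\boldsymbol{a}\prec\boldsymbol{b}$ if and only if there exists an $N\times N$ doubly stochastic matrix $\boldsymbol{P}$ with $\boldsymbol{a}^{T}=\boldsymbol{P}\boldsymbol{b}^{T}$. Then, by Birkhoff's theorem, $\boldsymbol{P}$ admits a representation as a convex combination of permutation matrices, $\boldsymbol{P}=\sum_{k}\lambda_{k}\boldsymbol{P}_{k}$ with $\lambda_{k}\geq 0$ and $\sum_{k}\lambda_{k}=1$. In component form this reads $a_{i}=\sum_{k}\lambda_{k}(\boldsymbol{P}_{k}\boldsymbol{b})_{i}$, so each $a_{i}$ is a convex combination of permuted entries of $\boldsymbol{b}$.

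Next, for any convex $\phi$, I would apply Jensen's inequality to the identity $\sum_{i}a_{i}X_{i}=\sum_{k}\lambda_{k}\sum_{i}(\boldsymbol{P}_{k}\boldsymbol{b})_{i}X_{i}$ to obtain, almost surely,
\begin{equation*}
\phi\Bigl(\sum_{i}a_{i}X_{i}\Bigr)\leq \sum_{k}\lambda_{k}\,\phi\Bigl(\sum_{i}(\boldsymbol{P}_{k}\boldsymbol{b})_{i}X_{i}\Bigr).
\end{equation*}
Taking expectations and then invoking exchangeability---which guarantees that for every permutation $\pi$ of $\{1,\dots,N\}$ the random vectors $(X_{\pi(1)},\dots,X_{\pi(N)})$ and $(X_{1},\dots,X_{N})$ share the same joint distribution, so in particular $\sum_{i}(\boldsymbol{P}_{k}\boldsymbol{b})_{i}X_{i}\stackrel{d}{=}\sum_{i}b_{i}X_{i}$ for every $k$---yields
\begin{equation*}
E\Bigl[\phi\Bigl(\sum_{i}a_{i}X_{i}\Bigr)\Bigr]\leq \sum_{k}\lambda_{k}\,E\Bigl[\phi\Bigl(\sum_{i}b_{i}X_{i}\Bigr)\Bigr]=E\Bigl[\phi\Bigl(\sum_{i}b_{i}X_{i}\Bigr)\Bigr],
\end{equation*}
which is exactly the desired convex ordering.

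The argument is essentially a packaging of three classical ingredients, so the main obstacle is not computational but conceptual: one must be careful to invoke exchangeability rather than mere identical marginal distributions, since the distributional equality of $\sum_{i}(\boldsymbol{P}_{k}\boldsymbol{b})_{i}X_{i}$ and $\sum_{i}b_{i}X_{i}$ requires the full joint distribution to be permutation invariant. A secondary subtlety is to apply Hardy--Littlewood--P\'olya in the correct direction (the majorized vector is the image of the majorant under a doubly stochastic map, not the other way around) and to make sure the Jensen step is legitimate for the whole class of convex test functions used to define $\leq_{cx}$.
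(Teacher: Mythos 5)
Your proof is correct. The paper itself gives no argument for this statement---it simply defers to \cite[Theorem 3.A.35]{shaked2007stochastic}---and your combination of the Hardy--Littlewood--P\'olya characterization of majorization, Birkhoff's theorem, Jensen's inequality, and exchangeability (used, as you rightly stress, at the level of the joint distribution rather than the marginals) is precisely the standard proof of that cited result, so there is no substantive divergence to report.
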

\begin{proof}
The details of the proof is given in \cite[Theorem 3.A.35]{shaked2007stochastic}.
\end{proof}
Here the notation $X \leq_{cx} Y$ denote that $X$ is smaller than $Y$ in convex order\footnote{If $X$ and $Y$ are two RVs such that $E[\phi(X)]\leq E[\phi(Y)] $ for all convex function $\phi: \mathbb{R}\rightarrow \mathbb{R}$,  provided the expectation exist. Then $X$ is said to be smaller than $Y$ in the convex order.}.  Also, note that a sequence of RVs $X_1,\cdots X_N$ is said to be exchangeable if for all $N$ and $\pi\in S(N)$ it holds that $X_1\cdots X_N \stackrel{\mathcal{D}}{=} X_{\pi(1)}\cdots X_{\pi(N)}$ where $S(N)$ is the group of permutations of $\{1,\cdots N\}$ and $\stackrel{\mathcal{D}}{=}$ denotes equality in distribution \cite{exch}. Furthermore, if $X_i$s are identically distributed,  they are exchangeable \cite[P. 129]{shaked2007stochastic}. Hence $G_i$s are exchangeable since they are identically distributed. It has already been shown that  $\boldsymbol{\hat{\lambda}} \succ\boldsymbol{ \lambda'}$ in Section IV.  Hence by a direct application of  Theorem $4$, one obtains, $I \leq_{cx} \hat{I}$.
\begin{theorem}
If $X\leq_{cx} Y$ and $f(.)$ is convex, then $E[f(X)]\leq E[f(Y)]$.
\end{theorem}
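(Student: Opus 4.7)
The statement is essentially an unpacking of the definition of convex order supplied in the footnote just above: $X \leq_{cx} Y$ is defined to mean that $E[\phi(X)] \leq E[\phi(Y)]$ for every convex $\phi:\mathbb{R}\to\mathbb{R}$ for which both expectations exist. My plan is therefore simply to specialize that universally quantified statement to the particular convex function $f$ in the hypothesis. Concretely, the proof is a single line: given $X \leq_{cx} Y$ and a convex $f$ with $E[f(X)]$ and $E[f(Y)]$ well-defined, instantiate the defining property of convex order at $\phi = f$ to obtain $E[f(X)] \leq E[f(Y)]$. A reference to \cite{shaked2007stochastic} (Chapter 3.A) covers exactly this equivalence and can be cited in lieu of re-deriving it.

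The only care needed is the integrability caveat. If one prefers a definition of $\leq_{cx}$ that does not already build in existence of the relevant expectations (e.g., the equivalent characterization via stop-loss transforms $E[(X-t)_+] \le E[(Y-t)_+]$ together with $E[X]=E[Y]$), then one would need a brief justification that $E[f(X)]$ and $E[f(Y)]$ are finite before invoking the inequality. In the paper's intended use, $X=I$ and $Y=\hat I$ are finite sums of gamma variates and $f$ will be $f(x)=\ln(1+s/x)$ for fixed $s>0$, so all moments are finite and no such subtlety arises.

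Since the proof is a direct appeal to the definition, there is no real obstacle. The genuine content of the section lies not in Theorem~5 itself but in how it is combined with the earlier results: one conditions on $S=s$, observes that $x \mapsto \ln(1+s/x)$ is convex on $(0,\infty)$ (which can be checked by computing its second derivative), and then uses $I \leq_{cx} \hat I$ (obtained from Theorem~4 applied to the exchangeable $G_i$ together with the majorization $\boldsymbol{\hat{\lambda}}\succ\boldsymbol{\lambda'}$ from Theorem~2) to conclude via Theorem~5 that $E_I[\ln(1+s/I)] \le E_{\hat I}[\ln(1+s/\hat I)]$. Taking outer expectation over $S$ then yields $R \leq \hat R$, which is the desired rate comparison.
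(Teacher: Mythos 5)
Your proposal is correct and matches the paper in substance: the paper's own ``proof'' is just a citation to \cite{kaas2001modern}, and the statement is indeed an immediate instantiation of the definition of $\leq_{cx}$ given in the paper's footnote at $\phi=f$, with the integrability caveat you note being harmless in the intended application. Nothing further is needed.
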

\begin{proof}
The details of the proof is given in \cite[Theorem 7.6.2]{kaas2001modern}.
\end{proof}
$\ln(1+\frac{k}{x})$ is a convex function when $k\geq 0$ and $x \geq 0$ due to the fact that double differentiation of $\ln(1+\frac{k}{x})$ is always non negative, i.e., $\frac{\partial \frac{\partial \ln \left(\frac{k}{x}+1\right)}{\partial x}}{\partial x}=\frac{k (k+2 x)}{x^2 (k+x)^2}\geq 0 $. Note that $S$ and $I$ are non negative RVs, hence by a direct application of Theorem $5$, one obtains 
\begin{equation}
E_{\hat{I}}\left[\ln\left(1+\frac{S}{\hat{I}}\right)\bigg|S=s\right]\geq E_I\left[\ln\left(1+\frac{S}{I}\right)\bigg|S=s\right]
\end{equation}
Since expectation preserve inequalities therefore, $E_S[E_I[\ln(1+\frac{S}{I})]] \leq E_S[E_{\hat{I}}[\ln(1+\frac{S}{\hat{I}})]]$. In other words, positive correlation among the interferers increases the average rate.

Summarizing, the average rate in the presence of the positive correlated interferers is always greater than or equal to the average rate in the presence of independent interferers. Now we briefly discuss the utility of our results in the presence of log normal fading.
\subsection{Log Normal Shadowing}
Although all the analysis so far  (comparison of the coverage probability and average rate) considered only small scale fading and path loss, the analysis can be further extended to take into account shadowing effects. In general, the large scale fading, i.e, log normal shadowing   is modeled by zero-mean log-normal distribution which is given by,
\begin{equation*}
f_X(x)=\frac{1}{x\sqrt{2\pi(\frac{\sigma_{dB}}{8.686})^2}}\exp\left(-\frac{\ln^2(x)}{2(\frac{\sigma_{dB}}{8.686})^2}\right), x>0,
\end{equation*}
where $\sigma_{dB}$ is the shadow standard deviation represented in dB. Typically the value of $\sigma_{dB}$ varies from $3$ dB to $10$ dB \cite{3gpp},\cite{3gpp1}. It is shown in \cite{generalized} that the pdf of the  composite fading channel (fading and shadowing) can be expressed using the generalized-K (Gamma-Gamma) model. Also in \cite{moment}, it has been shown that the generalized-K pdf can be well approximated by Gamma pdf $\mathcal{G}(\alpha_l, \lambda_l)$ using the moment matching method, with $\alpha_l$ and $\lambda_l$ are given by 
\begin{equation}
\alpha_l=\frac{1}{(\frac{1}{\alpha_u}+1)\exp((\frac{\sigma_{dB}}{8.686})^2)-1}=\frac{\alpha_u}{({\alpha_u}+1)\exp((\frac{\sigma_{dB}}{8.686})^2)-\alpha_u} \label{eq:shadow1} 
\end{equation} 
\begin{equation}  
\text{ and } \lambda_l=(1+\alpha_u)\lambda_u\exp\left(\frac{3(\frac{\sigma_{dB}}{8.686})^2}{2}\right)-\alpha_u\lambda_u\exp\left(\frac{(\frac{\sigma_{dB}}{8.686})^2}{2}\right) \label{eq:shadow}
\end{equation} 
Thus, SIR $\eta_l$ of a user can now given by 
\begin{equation}
\eta_l(r)=\frac{Pg_lr^{-\beta}}{\sum\limits_{i\in \phi}Ph^l_{i}d_i^{-\beta}}
\end{equation}
where $g_l\sim \mathcal{G}(\alpha_l, \lambda_l)$ and $h^l_{i}\sim \mathcal{G}(\alpha^l_i, \lambda^l_i)$. Here  $\alpha^l_i=\frac{1}{(\frac{1}{\alpha_i}+1)\exp((\frac{\sigma_{dB}}{8.686})^2)-1} $ and $\lambda^l_i=(1+\alpha_i)\lambda_i\exp\left(\frac{3(\frac{\sigma_{dB}}{8.686})^2}{2}\right)-\alpha_i\lambda_i\exp\left(\frac{(\frac{\sigma_{dB}}{8.686})^2}{2}\right)$. One can now derive the coverage probability expression in the presence of  log-normal shadowing $P_c^l(T,r)$ using the methods  in Section II to obtain,
\begin{equation*}
\textstyle
P_c^l(T,r)=\frac{\Gamma\left(\sum\limits_{i=1}^{N}\alpha_i^l+\alpha_l\right)}{\Gamma\left(\sum\limits_{i=1}^{N}\alpha_i^l+1\right)\Gamma{(\alpha_l)}} \prod\limits_{i=1}^{N}\left(\frac{\lambda_l}{\lambda_l+\lambda_i^l d_i^{-\beta} r^{\beta}T}\right)^{\alpha_i^l} \times
\end{equation*}
\begin{equation}
\textstyle 
F_D^{(N)}\left[1-\alpha_l,\alpha_1^l,  \cdots, \alpha_N^l;\sum\limits_{i=1}^{N}\alpha_i^l+1;\frac{\lambda_l}{\lambda_l+ r^{\beta}\lambda_1^l d_1^{-\beta}T},\cdots, \frac{\lambda_l}{\lambda_l+ r^{\beta}\lambda_N^l d_N^{-\beta}T}\right]\label{eq:cov0}
\end{equation}
Further, the correlation coefficient between two identically distributed generalized-K RVs is derived in \cite[Lemma 1]{5501945}, and it is in terms of correlation coefficient  of the RVs corresponding to the short term fading component ($\rho_{i,j}$) and the correlation coefficient of the RVs corresponding to the shadowing component($\rho^s_{i,j}$). The resultant correlation coefficient ($\rho^l_{i,j}$) is then given by
\begin{equation}
\rho^l_{i,j}=\frac{\frac{\rho_{i,j}}{\left( \exp(\frac{\sigma_{dB}}{8.686})^2)-1\right)}+\rho^s_{i,j} \alpha_c+\rho_{i,j}\rho^s_{i,j}}{\alpha_c+\frac{1}{\left( \exp(\frac{\sigma_{dB}}{8.686})^2)-1\right)}+1}\label{rho}
\end{equation}
Now, similar to the independent case, the coverage probability for correlated interferers case $\tilde{P_c}^l(T,r)$ is given by
given by 
\begin{equation}
\textstyle
\tilde{P_c}^l(T,r)=\frac{\Gamma\left(N\alpha_c^l+\alpha_l\right)}{\Gamma\left(N\alpha_c^l+1\right)\Gamma{(\alpha_l)}} \prod\limits_{i=1}^{N}\left(\frac{\lambda_l}{\lambda_l+\hat{\lambda}_i^l  r^{\beta}T}\right)^{\alpha_i^l} \times
\end{equation}
\begin{equation}
\textstyle
F_D^{(N)}\left[1-\alpha_l,\alpha_c^l,  \cdots, \alpha_c^l;N\alpha_c^l+1;\frac{\lambda_l}{\lambda_l+ r^{\beta}\hat{\lambda}_1^l T},\cdots, \frac{\lambda_l}{\lambda_l+ r^{\beta}\hat{\lambda}_N^l T}\right]\label{eq:cov00}
\end{equation}
where $\hat{\lambda}_i^l$s are the eigenvalues of $\mathbf{A}^l=\mathbf{D}^l\mathbf{C}^l$, where $\mathbf{D}^l$ is the diagonal matrix with entries $\lambda_i^ld_i^{-\beta}$ and $\mathbf{C}^l$ is defined by\\
\begin{equation}
\mathbf{C}^l=\left[ \begin{array}{cccc}
 1 & \sqrt{\rho_{12}^l} &...&\sqrt{\rho_{1N}^l}   \\ \sqrt{\rho_{21}^l} &1&...&\sqrt{\rho_{2N}^l} \\  \cdots &\cdots &\ddots & \cdots \\ \sqrt{\rho_{N1}^l} &\cdots &\cdots &1  \end{array} \right],
\end{equation}
with $\rho_{ij}^l$ is given by \eqref{rho}. Note that both \eqref{eq:cov0} and \eqref{eq:cov00} have a similar functional form and they both are also similar to \eqref{eq:cov5} and \eqref{corr}, respectively.  Hence now  the coverage probability and  average rate for the i.n.i.d. case and correlated case can be compared using the methods outlined in Section IV and Section V. In other words, it can be shown that  the coverage probability in the presence of correlated interferers is greater than or equal to the coverage probability in presence of independent interferers, when user's shape parameter is less than or equal to $1$, i.e., $\alpha_l\leq 1$, in the presence of shadow fading. Also, the average rate in the presence of positive correlated interferers is always greater than or equal to the average rate in the presence of independent interferers, in the presence of shadow fading.
\subsection{Extension of this work for $\eta-\mu$ fading}
Recently, the  $\eta-\mu$ fading distribution with two shape parameters $\eta$ and $\mu$ has been proposed to model a general non-line-of-sight propagation scenario \cite{4231253}.  It includes Nakagami-q (Hoyt), one sided Gaussian, Rayleigh and Nakagami-m as special cases. It has been shown in \cite{5345735} that the sum of correlated $\eta-\mu$ power RVs with half integer or integer value of parameter $\mu$ can be represented by the sum of independent gamma RVs with suitable parameters. Hence,  our analysis on the impact of correlation on the coverage probability and average rate can be extended to the scenarios where the user's  channel experience Nakagami-m fading and interfering channel experience $\eta-\mu$ fading with half integer or integer value of parameter $\mu$. Although, there is a restriction  on the parameter $\mu$, it still entitles us to include one-sided Gaussian, Rayleigh, Nakagami-q (Hoyt) and Nakagami-m (with integer m) fading for the interfering signal in our analysis.

 In the next section, we will show simulation results and discuss how those match with the theoretical results. We also briefly discuss how the analysis carried out in this work can be of utility to the network and the user.
\section{Numerical Analysis and Application}
In this section, we give some simulation results for the coverage probability and rate for both independent and correlated case.  The impact of correlation among interferers on the coverage probability and rate  is  discussed and it is observed that in all simulations the rate is higher for the correlated case when compared to i.n.i.d. case.   
      \begin{figure}[ht]
       \centering
       \includegraphics[scale=0.32]{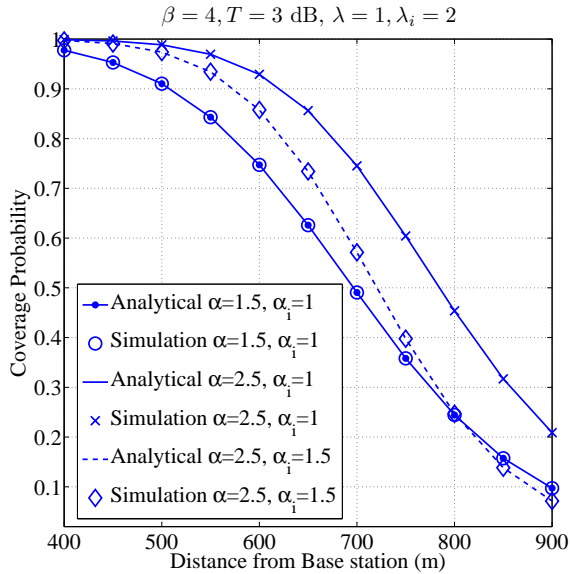}
       \caption{Coverage Probability of a user with respect to distance from the BS in presence of Nakagami-m fading}
       \label{fig:cov}
       \end{figure}
For the simulations, we consider a $19$ cell system with hexagonal structure having inter cell site distance $2R=1732 $ meters as shown in Fig. \ref{fig:hexagonal}. For each user which is connected to the $0$th cell we generate the gamma RV corresponding to its own channel and gamma RVs corresponding to the $18$ interferers and then compute SIR. Then, using the simulated SIR, the coverage probability and average rate can be obtained and they are averaged over 10000 times.

 Fig. \ref{fig:cov} shows the impact of shape parameter on the  coverage probability in the i.n.i.d. case.  We first note that the simulation results exactly match with the analytical results (computed using Eq. \eqref{eq:cov5} ). Secondly, it can be observed that as user channel's shape parameter $(\alpha_u)$ increases while keeping the interferer shape parameters fixed, the coverage probability increases. Whereas, when interferer channel's  shape parameter increases and the user channel's shape parameter is fixed, the coverage probability decreases as expected.

Fig. \ref{fig:correlation} and Fig. \ref{fig:correlation1} depict the impact of correlation among the interferers on the coverage probability for different values of shape parameter. The  correlation among the interferers is defined by the correlation matrix in \eqref{correlation}  with  $\rho_{pq}=\rho^{|p-q|}$ where $p,q=1,\cdots ,N$ \cite{reig2008performance}. From Fig. \ref{fig:correlation}, it can be observed that for  $\alpha_u=0.5$ and $\alpha_u=1$, coverage probability in presence of correlation is higher than that of independent scenario (which match our analytical result).  For example, at $\alpha_u=0.5$, coverage probability increases from $0.12$ in the i.n.i.d case to $0.22$ in the correlated case and at $\alpha_u=1$, coverage probability increases from $0.07$ to $0.12$ when user is at distance $900$m from the BS.   In  Fig. \ref{fig:correlation1}, where $\alpha_u>1$, one cannot say  that coverage probability in presence of correlation is higher or lower than that of independent scenario. However, it can be seen that when  $\alpha_u$  is significantly higher than the interferers  shape parameter (i.e., user channel sees less fading than interferers channel), coverage probability in the presence of independent interferers dominates  over the coverage probability in the presence of correlated interferers. While if $\alpha_u$ is comparable to the interferers channel shape parameter, the coverage probability of independent interferers is higher than the coverage probability of correlated interferers when user is close to the BS. However, the coverage probability of independent interferers is significantly lower than the coverage probability of correlated interferers when the user is far from the BS.
            \begin{figure}[ht]
            \centering
            \includegraphics[scale=0.32]{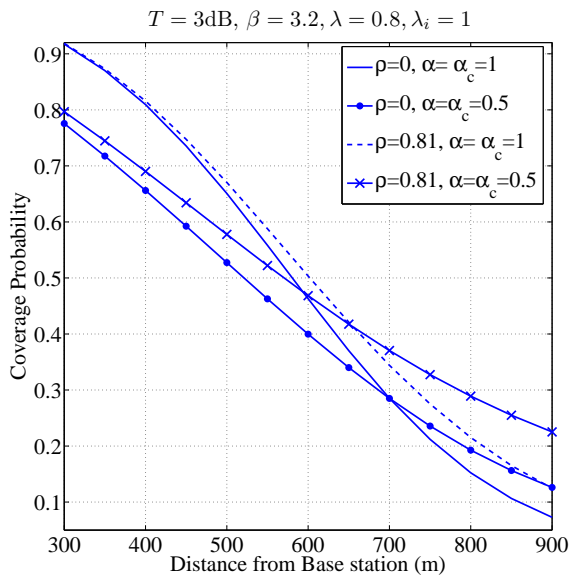}
            \caption{Impact of correlation among the interferers on the coverage probability for different value of shape parameter, when $\alpha_u\leq 1$.}
             \label{fig:correlation}
             \end{figure}
            \begin{figure}[ht]
            \centering
            \includegraphics[scale=0.35]{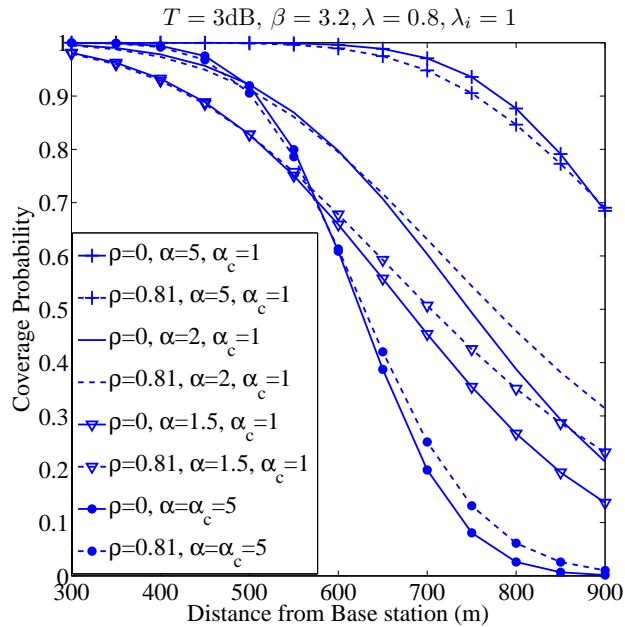}
            \caption{Impact of correlation among the interferers on the coverage probability for different value of shape parameter, when $\alpha_u> 1$.}
             \label{fig:correlation1}
             \end{figure}
\subsection{How the User can Exploit Correlation among Interferers}
We will now briefly discuss how the user in a cellular network can exploit knowledge of positive correlation among its interferers.
We  compare the coverage probability in the presence of correlated interferers for single input single output (SISO) network with the coverage probability in the presence of independent interferers for single input multiple output (SIMO) network to show that the impact of correlation is significant for the cell-edge users (users not near the BS).  For the SIMO network, it is assumed that each user is equipped with $2$ antennas and both antennas at the user are used for reception since downlink is considered. A linear minimum mean-square-error (LMMSE) receiver \cite{tse2005fundamentals} is considered. In order to calculate coverage probability with a LMMSE receiver, it is assumed that the closest interferer can be completely  cancelled  at the SIMO receiver. Fig. \ref{fig:correlation2} plots the SISO coverage probability in the presence of correlated interferers case and the coverage probability in the presence of i.n.i.d. interferers for a SIMO network. It can be seen that for $\rho=0.98$, the SISO coverage probability for the correlated case\footnote{The  correlation among the interferers is defined by the correlation matrix in \eqref{correlation}  with  $\rho_{pq}=\rho^{|p-q|}$ where $p,q=1,\cdots ,N$}  is higher than the SIMO coverage probability for i.n.i.d. case. However, for $\rho=0.81$, SISO coverage probability is close to the SIMO coverage probability at the cell-edge. For example, the coverage probabilities for $\rho=0.98$,  $\rho=0.81$ and the  SIMO case with i.n.i.d. interferers are $0.256$, $0.2$ and $0.18$, respectively, when user is at distance $900$m from the BS.  In other words, correlation among the interferers seems  to be as good as having one additional antenna at the receiver capable of cancelling the dominant interferer.
            \begin{figure}[ht]
            \centering
            \includegraphics[scale=0.32]{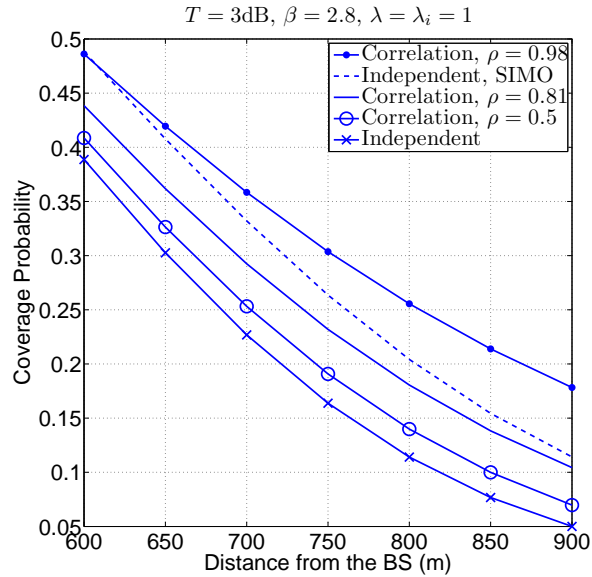}
            \caption{Comparison of coverage probability of correlated interferers with the coverage probability of SIMO network. Here $\alpha_u=\alpha_i=1$.}
             \label{fig:correlation2}
 \end{figure}

Fig. \ref{fig:corr} shows the average rate in the presence of correlation among the interferers and the average rate in the presence of independent interferers for SISO case.   It can be observed that average rate is higher in the presence of correlated interferers. It can be also seen that for $\rho=0.98$, average rate for the correlated case is higher than that of the $1 \times 2$ SIMO network. For example, average rates for $\rho=0.98$, the SIMO case and independent case are $1.41$ nats/Hz, $1.28$ nats/Hz and $1.09$ nats/Hz, respectively, when user is at distance $600$m from the BS.

 Fig. \ref{fig:corr1} presents the comparison  of average rate in the presence of log normal shadowing.  It can be seen that for both $\rho=0.98$ and $\rho=0.81$, the average rate of the correlated case is higher than that of the  $1 \times 2$ SIMO network with $\rho=0$. In other words, in presence of shadowing the impact  of correlation is even more significant. For example, average rates for SISO case with $\rho=0.98$, the SIMO case with $\rho=0$, and the SISO case with $\rho=0$ are $1.25$ nats/Hz, $0.86$ nats/Hz and $0.71$ nats/Hz, respectively when the user is at distance $700$m from the BS. Obviously, if one had correlated interferers in the SIMO system that would again lead to improved coverage probability and average rate and may be compared to a SIMO system with higher number of antennas. In all three cases, it is apparent that if the correlation among the interferers is exploited, it leads to  performance results for a SISO system which are comparable to the   performance of a $1\times 2$ SIMO system with independent interferers.
        \begin{figure}[ht]
            \centering
            \includegraphics[scale=0.32]{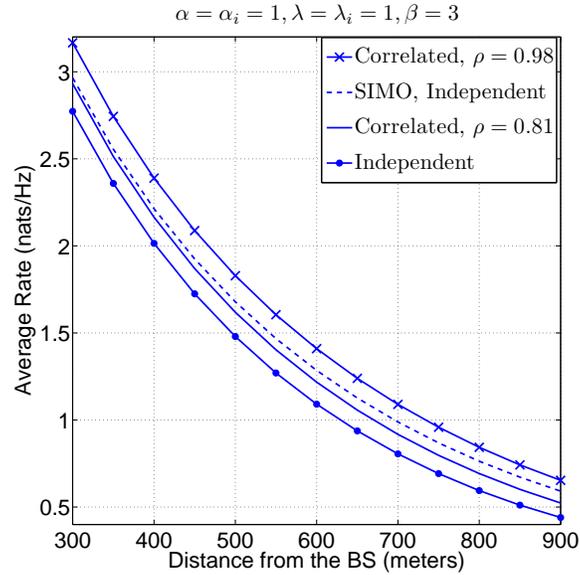}
            \caption{Comparison of the average rate in presence of independent interferers with the average rate in presence of correlated interferers. }
             \label{fig:corr}
             \end{figure}
        \begin{figure}[ht]
            \centering
            \includegraphics[scale=0.32]{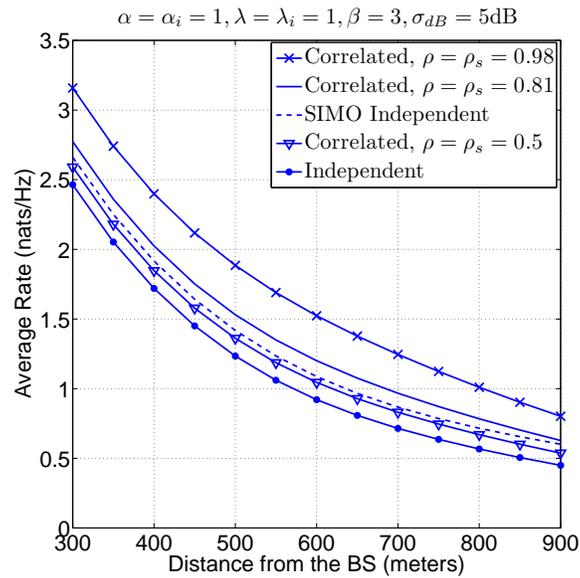}
            \caption{Comparison of the average rate in the  presence of log normal shadowing. }
             \label{fig:corr1}
             \end{figure}

We would also likely to briefly point out that the impact of correlation among the interferers is like that of introducing interference alignment in a system. Interference alignment actually aligns interference using appropriate precoding so as reduce the number of interferers one needs to cancel. Here the physical nature of the wireless channel and the presence of co-located interferers also ``aligns'' the interferer partially. This is the reason one can get a gain equivalent to $1\times 2$ system in a $1\times 1$ system with correlated interferers provided the user knows about the correlation. 

Summarizing, our work is able to analytically shows the impact of correlated interferers on coverage probability and rate. This can be used by the network and user to decide whether one wants to use the antennas at the receiver for diversity gain or interference cancellation depending on the information available about interferers correlation. Note that interferers from adjacent sector of a BS will definitely be correlated \cite{correlation, 944855, 991146, 5590312, 3gpp}. We believe that this correlation should be exploited, since the analysis shows that knowledge of correlation will lead to higher coverage probability and rate.  

\section{Conclusions}
In this work, the coverage probability expressions and rate expressions have been compared analytically for following two cases: $(a)$ Interferers and user channel having arbitrary Nakagami-m fading parameters. $(b)$ Interferers being correlated where the correlation is specified by a correlation matrix. We have  shown that the coverage probability in correlated interferer case is higher than that of the independent case, when the user channel's shape parameter is lesser than or equal to one, and the interferers have Nakagami-m fading with arbitrary parameters. Further, it has been shown that positive correlation among the interferers always increases the average rate.  We have also taken into account the shadow fading component in our analysis. The impact of correlation seems even more pronounced in the presence of  shadow fading. Our results indicate that if the user is aware of the interferers correlation matrix then it can exploit it since the correlated interferers behave like partially aligned interferers. This means that if the user is aware of the correlation then one can obtain a rate equivalent to a $1\times 2$ system in a $1\times 1$ system depending on the correlation matrix structure. Extensive simulations  were performed and these match with the theoretical results.
\appendix
\label{app}
\section*{Proof of Theorem $3$}
The coverage probability expressions for the scenario when interferers are i.n.i.d. and the scenario when interferers are correlated  are given in \eqref{eq:cov5} and \eqref{corr}, respectively and rewriting them for the case when $\alpha_i=\alpha_c \text{ }\forall i$, one obtains
\begin{equation}
P_c(T,r)=K' F_D^{(N)}\left[1-\alpha_u,\alpha_c,  \cdots ,\alpha_c;N\alpha_c+1;\frac{\lambda}{\lambda+ r^{\beta}\lambda'_1T},\cdots,\frac{\lambda}{\lambda+ r^{\beta}\lambda'_NT}\right]\label{compare}
\end{equation}
\begin{equation}
P_c^c(T,r)=\hat{K} F_D^{(N)}\left[1-\alpha_u,\alpha_c,  \cdots,\alpha_c;N\alpha_c+1;\frac{\lambda}{\lambda+ r^{\beta}\hat{\lambda}_1T},\cdots,\frac{\lambda}{\lambda+ r^{\beta}\hat{\lambda_N}T}\right] \label{compare1},
\end{equation}
where $K'=\frac{\Gamma\left(N\alpha_c+\alpha_u\right)}{\Gamma\left(N\alpha_c+1\right)}\frac{1}{\Gamma{(\alpha_u)}}\prod\limits_{i=1}^{N}\left(\frac{1}{1+\lambda'_i \frac{r^{\beta}T}{\lambda}}\right)^{\alpha_c}$ and $\hat{K}=\frac{\Gamma\left(N\alpha_c+\alpha_u\right)}{\Gamma\left(N\alpha_c+1\right)}\frac{1}{\Gamma{(\alpha_u)}}\prod\limits_{i=1}^{N}\left(\frac{1}{1+\hat{\lambda}_i \frac{r^{\beta}T}{\lambda}}\right)^{\alpha_c}$.  From Theorem $1$ it is clear that $\hat{K}> K'$ . Now, we need to compare the Lauricella's function of the fourth kind of \eqref{compare} and \eqref{compare1}. Here, for comparison we use the series expression for $F_D(.)$.

We expand the  series expression for the Lauricella's function of the fourth kind in the  following form:
\begin{align}
F_D^{(N)}[a,b,\cdots, b;c;x_1,\cdots, x_N]=&1+K_{1,1}\sum\limits_{i=1}^{N}x_i+K_{2,1}\sum\limits_{i=1}^{N}x_i^2+K_{2,2}\sum\limits_{1\leq i<j\leq N}x_ix_j+K_{3,1}\sum\limits_{i=1}^{N}x_i^3\nonumber\\
&+K_{3,2}\sum\limits_{i,j=1, s.t.i\neq j}^{N}x_i^2x_j+K_{3,3}\sum\limits_{1\leq i<j<k\leq N}x_ix_jx_k+\cdots\label{lauricella}
\end{align}
 where
$K_{1,1}=\frac{(a)_1(b)_1}{(c)_1 1!}$, $K_{2,1}=\frac{(a)_2(b)_2}{(c)_2 2!}$, $K_{2,2}=\frac{(a)_2(b)_1(b)_1}{(c)_21!1!}$, $K_{3,1}=\frac{(a)_3(b)_3}{(c)_3 3!}$, $K_{3,2}=\frac{(a)_3(b)_2(b)_1}{(c)_3 2!1!}$, $K_{3,3}=\frac{(a)_3(b)_1(b)_1(b)_1}{(c)_31!1!1!}$ and so on.

Hence the coverage probability for independent case given in \eqref{compare} can be written as
\begin{align}
P_c(T,r)=&K'\bigg[ 1+K_{1,1}\sum\limits_{i=1}^{N}\left(\frac{1}{1+\lambda'_i\frac{r^{\beta} T}{\lambda}}\right)+K_{2,1}\sum\limits_{i=1}^{N}\left(\frac{1}{1+\lambda'_i\frac{r^{\beta} T}{\lambda}}\right)^2+ K_{3,1}\sum\limits_{i=1}^{N}\left(\frac{1}{1+\lambda'_i\frac{r^{\beta} T}{\lambda}}\right)^3+\nonumber\\
&K_{2,2}\sum\limits_{1\leq i<j\leq N}\left(\frac{1}{1+\lambda'_i\frac{r^{\beta} T}{\lambda}}\right)\left(\frac{1}{1+\lambda'_j\frac{r^{\beta} T}{\lambda}}\right)+
K_{3,2}\sum\limits_{i,j=1, s.t. i\neq j}^{N}\left(\frac{1}{1+\lambda'_i\frac{r^{\beta} T}{\lambda}}\right)^2\left(\frac{1}{1+\lambda'_j\frac{r^{\beta} T}{\lambda}}\right)+ \nonumber\\
 &K_{3,3}\sum\limits_{1\leq i<j<k\leq N}\left(\frac{1}{1+\lambda'_i\frac{r^{\beta} T}{\lambda}}\right)\left(\frac{1}{1+\lambda'_j\frac{r^{\beta} T}{\lambda}}\right)\left(\frac{1}{1+\lambda'_k\frac{r^{\beta} T}{\lambda}}\right)+\cdots\bigg]\label{sum1}
\end{align} 
Similarly,  for the correlated  case  the coverage probability given in \eqref{compare1} can be written as
\begin{align}
P_c^c(T,r)=&\hat{K}\bigg[ 1+K_{1,1}\sum\limits_{i=1}^{N}\left(\frac{1}{1+\hat{\lambda}_i\frac{r^{\beta} T}{\lambda}}\right)+K_{2,1}\sum\limits_{i=1}^{N}\left(\frac{1}{1+\hat{\lambda}_i\frac{r^{\beta} T}{\lambda}}\right)^2+ K_{3,1}\sum\limits_{i=1}^{N}\left(\frac{1}{1+\hat{\lambda}_i\frac{r^{\beta} T}{\lambda}}\right)^3+\nonumber\\
&K_{2,2}\sum\limits_{1\leq i<j\leq N}\left(\frac{1}{1+\hat{\lambda}_i\frac{r^{\beta} T}{\lambda}}\right)\left(\frac{1}{1+\hat{\lambda}_j\frac{r^{\beta} T}{\lambda}}\right)+
K_{3,2}\sum\limits_{i,j=1, s.t. i\neq j}^{N}\left(\frac{1}{1+\hat{\lambda}_i\frac{r^{\beta} T}{\lambda}}\right)^2\left(\frac{1}{1+\hat{\lambda}_j\frac{r^{\beta} T}{\lambda}}\right)+\nonumber\\
 &K_{3,3}\sum\limits_{1\leq i<j<k\leq N}\left(\frac{1}{1+\hat{\lambda}_i\frac{r^{\beta} T}{\lambda}}\right)\left(\frac{1}{1+\hat{\lambda}_j\frac{r^{\beta} T}{\lambda}}\right)\left(\frac{1}{1+\hat{\lambda}_k\frac{r^{\beta} T}{\lambda}}\right)+\cdots\bigg]\label{sum2}
\end{align} 
Here 
$K_{1,1}=\frac{(1-\alpha_u)_1(\alpha_c)_1}{(N \alpha_c+1)_1 1!}$, $K_{2,1}=\frac{(1-\alpha_u)_2(\alpha_c)_2}{(N \alpha_c+1)_2 2!}$, $K_{2,2}=\frac{(1-\alpha_u)_2(\alpha_c)_1(\alpha_c)_1}{(N \alpha_c+1)_2 1!1!}$, $K_{3,1}=\frac{(1-\alpha_u)_3(\alpha_c)_3}{(N \alpha_c+1)_3 3!}$, $K_{3,2}=\frac{(1-\alpha_u)_3(\alpha_c)_2(\alpha_c)_1}{(N \alpha_c+1)_3 2!1!}$, $K_{3,3}=\frac{(1-\alpha_u)_3(\alpha_c)_1(\alpha_c)_1(\alpha_c)_1}{(N \alpha_c+1)_31!1!1!}$ and so on. Note that here $K_{i,j}$ are the same for both $P_c(T,r)$ and $P_c^c(T,r)$. Now, we want to show that each summation term in the series expression is a Schur-convex function.

Each summation term in the series expression is symmetrical due to the fact that any two of its argument can be interchanged without changing the value of the function. We have already shown that $\prod\limits_{i=1}^{N}\left(\frac{1}{1+kx_i}\right)^{a_i} $ is a convex function $\forall x_i\geq 0$ and $\forall a_i>0$. Now, the terms in the summation terms in \eqref{sum1} and \eqref{sum2} are of the form $\prod\limits_{i=1}^{M}\left(\frac{1}{1+kx_i}\right)^{a_i} $ where $M\leq N$. To show that these functions are convex function we need to show that the corresponding Hessian are p.d. The corresponding Hessians are nothing but principal sub-matrices of the matrix in \eqref{Hessian}. Hence using the fact that every principal sub-matrix of a s.p.d. matrix is a s.p.d. matrix \cite{meyer2000matrix}, one can show that each  term of each summation term is a convex function. Using the fact that convexity is preserved under summation one can show that each summation term is a convex function. Thus, each summation term 
in 
series expression is a Schur-convex function.

Now we consider following two cases.

Case I when $\alpha_u< 1$:
Since $\alpha_u< 1$, so $1-\alpha_u > 0$ and hence all the constant $K_{i, j}> 0 \text{ } \forall\text{ } i, j$.  Each summation term in series expression of coverage probability for correlated case is greater  than or equal to the corresponding summation term in the series expression of coverage probability for independent case. Thus, if user  channel's shape parameter $\alpha_u< 1$ then coverage probability of  correlated case is  greater than or equal to  the coverage probability for independent case.

Case II when $\alpha_u> 1$:
Since $\alpha_u > 1$, then $1-\alpha_u < 0$ and hence $K_{i,j}<0 \text{ }\forall i \in 2|\mathbb{Z}|+1 \text { and } \forall j $  where set $\mathbb{Z}$ denote the integer number, due to  the fact that $(a)_N<0 \text{ if } a<0 \text{ and } N \in 2|\mathbb{Z}|+1 $.  Whereas, $K_{i,j}>0 \text{ }\forall i \in 2|\mathbb{Z}| \text { and } \forall j $ due of the fact that $(a)_N>0 \text{ if } a<0 \text{ and } N\in 2|\mathbb{Z}|$. Thus, if   $\alpha_u>1$, we cannot state whether the coverage probability of one case is greater than or lower than the other case.

\bibliographystyle{IEEEtran}
\bibliography{bibfile}

\end{document}